\newtheorem{definition}{Definition}
\newtheorem{remark}{Remark}
\newtheorem{example}{Example}
\theoremstyle{plain}
\theoremstyle{plain}
\newtheorem{thm}{Theorem}
\newtheorem{lemma}{Lemma}
\newcommand{\comment}[1]{}
\begin{document}

\title{\vspace{-0.5em}High Performance Non-Binary Spatially-Coupled Codes for Flash Memories\vspace{-0.4em}}

\author{
   \IEEEauthorblockN{Ahmed Hareedy, Homa Esfahanizadeh, and Lara Dolecek$$}
   \IEEEauthorblockA{$$Electrical Eng. Department, University of California, Los Angeles, Los Angeles, CA 90095 USA \\
                                  \{ahareedy, hesfahanizadeh\}@ucla.edu and dolecek@ee.ucla.edu}\vspace{-2.5em}
}
\maketitle

\begin{abstract}
Modern dense Flash memory devices operate at very low error rates, which require powerful error correcting coding (ECC) techniques. An emerging class of graph-based ECC techniques that has broad applications is the class of spatially-coupled (SC) codes, where a block code is partitioned into components that are then rewired multiple times to construct an SC code. Here, our focus is on SC codes with the underlying circulant-based structure. In this paper, we present a three-stage approach for the design of high performance non-binary SC (NB-SC) codes optimized for practical Flash channels; we aim at minimizing the number of detrimental general absorbing sets of type two (GASTs) in the graph of the designed NB-SC code. In the first stage, we deploy a novel partitioning mechanism, called the \textit{optimal overlap partitioning}, which acts on the protograph of the SC code to produce optimal partitioning corresponding to the smallest number of detrimental objects. In the second stage, we apply a new \textit{circulant power optimizer} to further reduce the number of detrimental GASTs. In the third stage, we use the \textit{weight consistency matrix framework} to manipulate edge weights to eliminate as many as possible of the GASTs that remain in the NB-SC code after the first two stages (that operate on the unlabeled graph of the code). Simulation results reveal that NB-SC codes designed using our approach outperform state-of-the-art NB-SC codes when used over Flash channels.
\end{abstract}

\vspace{-0.24em}
\section{Introduction}\label{sec_intro}
\vspace{-0.14em}

Because of their excellent performance, graph-based codes are among the most attractive error correction techniques deployed in modern storage devices \cite{jia_sym, maeda_fl}. Non-binary (NB) codes offer superior performance over binary codes, and are thus well suited for modern Flash~memories. The nature of~the detrimental objects that dominate the error floor region of non-binary graph-based codes depends on the underlying channel of the device. Unlike in the case of canonical channels, in a recent research \cite{ahh_jsac}, it was revealed that general absorbing sets of type two (GASTs) are the objects that dominate the error floor of NB graph-based codes over practical, inherently asymmetric Flash channels \cite{ahh_jsac, mit_nl}. We analyzed GASTs, and proposed a combinatorial framework, called the weight consistency matrix (WCM) framework, that removes GASTs from the Tanner graph of NB codes, and results in at least $1$ order of magnitude performance gain over asymmetric Flash channels \cite{ahh_jsac, ahh_tit}.

A particular class of graph-based codes that has received recent attention is the class of spatially-coupled (SC) codes \cite{fels_sc}. SC codes are constructed via partitioning an underlying LDPC code into components, and then coupling them together multiple times. Recent results on SC codes include asymptotic analysis, e.g., \cite{kud_sc}, and finite length designs, e.g., \cite{pus_sc, mitch_sc, iye_sc}. Non-binary SC (NB-SC) codes designed using cutting vector (CV) partitioning and optimized for 1-D magnetic recording applications were introduced in \cite{homa_sc}. The idea of partitioning the underlying block code by minimizing the overlap of its rows of circulants (so called minimum overlap (MO)) was recently introduced and applied to AWGN channels in \cite{homa_mo}.

In this paper, we present the first study of NB-SC codes designed for practical Flash channels. The underlying block codes we focus on are circulant-based (CB) codes. Our combinatorial approach to design NB-SC codes comprises three stages. The first two stages aim at optimizing the unlabeled graph of the SC code (the graph of the SC code with all edge weights set to $1$), while the third stage aims at optimizing the edge weights. The three consecutive stages are:
\begin{enumerate}
\item We operate on the binary protograph of the SC code, and express the number of subgraphs we want to minimize in terms of the \textit{\textbf{overlap parameters}}, which characterize the partitioning of the block code. Then, we solve this discrete optimization problem to determine the optimal overlap parameters. We call this new partitioning technique the \textit{\textbf{optimal overlap (OO) partitioning}}. 
\item Given the optimal partitioning, we then apply a new heuristic program to optimize the \textit{\textbf{circulant powers}} of the underlying block code to further reduce the number of problematic subgraphs in the unlabeled graph of the SC code. We call this heuristic program the \textit{\textbf{circulant power optimizer (CPO)}}. 
\item Having optimized the underlying topology using the first two stages (OO-CPO), in the last stage, we focus on the edge weight processing in order to remove as many as possible of the remaining detrimental GASTs in the NB-SC code. To achieve this goal, we use the \textit{\textbf{WCM framework}} \cite{ahh_jsac, ahh_tit}. We also enumerate the minimum cardinality sets of edge weight changes that are candidates for the GAST removal.
\end{enumerate}
The three stages are necessary for the NB-SC code design procedure. We demonstrate the advantages of our code design approach over approaches that use CV partitioning and MO partitioning in the context of column weight $3$ SC codes.

The rest of the paper is organized as follows. In Section~\ref{sec_prelim}, we present some preliminaries. In Section \ref{sec_oo}, we detail the theory of the OO partitioning in the context of column weight $3$ SC codes. The CPO is then described in Section \ref{sec_cpo}. Next, in Section \ref{sec_wcm}, we propose a further discussion about the WCM framework. Our NB-SC code design steps and simulation results are presented in Section \ref{sec_sims}. Finally, the paper is concluded in Section \ref{sec_conc}.

\vspace{-0.3em}
\section{Preliminaries}\label{sec_prelim}
\vspace{-0.1em}

In this section, we review the construction of NB-SC~codes, as well as the CV and MO partitioning techniques. Furthermore, we recall the definition of GASTs and the key idea of the WCM framework.

Throughout this paper, each column (resp., row) in a parity-check matrix corresponds to a variable node (VN) (resp., check node (CN)) in the equivalent graph of the matrix.  Moreover, each non-zero entry in a parity-check matrix corresponds to an edge in the equivalent graph of the matrix.

Let $\bold{H}$ be the parity-check matrix of the underlying regular non-binary CB code that has column weight (VN degree) $\gamma$ and row weight (CN degree) $\kappa$. The binary image of $\bold{H}$, which is $\bold{H}^b$, consists of $\gamma \kappa$ circulants. Each circulant is of the form $\sigma^{f_{i, j}}$, where $i$, $0 \leq i \leq \gamma-1$, is the row group index, $j$, $0 \leq j \leq \kappa-1$, is the column group index, and $\sigma$ is the $p \times p$ identity matrix cyclically shifted one unit to the left (a circulant permutation matrix). Circulant powers are $f_{i, j}$, $\forall i$ and $\forall j$. Array-based (AB) codes are CB codes with $f_{i, j} = ij$, $\kappa = p$, and $p$ prime. In this paper, the underlying block codes we use to design SC codes are CB codes with no zero circulants.

The NB-SC code is constructed as follows. First, $\bold{H}^b$ is partitioned into $m+1$ disjoint components (of the same size as $\bold{H}^b$): $\bold{H}^b_0, \bold{H}^b_1, \dots, \bold{H}^b_m$, where $m$ is defined as the memory of the SC code. Each component $\bold{H}^b_y$, $0 \leq y \leq m$, contains some of the $\gamma \kappa$ circulants of $\bold{H}^b$ and zero circulants elsewhere such that $\bold{H}^b = \sum_{y=0}^{m} \bold{H}^b_y$. In this work, we focus on $m=1$, i.e., $\bold{H}^b = \bold{H}^b_0 + \bold{H}^b_1$. Second, $\bold{H}^b_0$ and $\bold{H}^b_1$ are coupled together $L$ times (see \cite{mitch_sc} and \cite{homa_sc}) to construct the binary image of the parity-check matrix of the NB-SC code, $\bold{H}^b_{SC}$, which is of size $(L+1)\gamma p \times L\kappa p$. A \textit{\textbf{replica}} is any $(L+1)\gamma p \times \kappa p$ submatrix of $\bold{H}^b_{SC}$ that contains $\left [\bold{H}^{bT}_0 \text{ } \bold{H}^{bT}_1 \right ]^T$ and zero circulants elsewhere (see \cite{homa_mo}). Replicas are denoted by $\bold{R}_r$, $1 \leq r \leq L$. Overlap parameters for partitioning as well as circulant powers can be selected to enhance the properties of $\bold{H}^b_{SC}$. Third, the matrix $\bold{H}$ is generated by replacing each $1$ in $\bold{H}^b$ with a value $\in$ GF($q$)$\backslash \{0\}$ (we focus on $q=2^\lambda \geq 4$). Fourth, the parity-check matrix of the NB-SC code, $\bold{H}_{SC}$, is constructed by applying the partitioning and coupling scheme described above to $\bold{H}$.

The \textit{\textbf{binary protograph matrix (BPM)}} of a general binary CB matrix is the matrix resulting from replacing each $p \times p$ non-zero circulant with $1$, and each $p \times p$ zero circulant with $0$. The BPMs of $\bold{H}^b$, $\bold{H}^b_0$, and $\bold{H}^b_1$ are $\bold{H}^{bp}$, $\bold{H}^{bp}_0$, and $\bold{H}^{bp}_1$, respectively, and they are all of size $\gamma \times \kappa$. The BPM of $\bold{H}^b_{SC}$ is $\bold{H}^{bp}_{SC}$, and it is of size $(L+1)\gamma \times L\kappa$. This $\bold{H}^{bp}_{SC}$ also has $L$ replicas, $\bold{R}_r$, $1 \leq r \leq L$, but with $1 \times 1$ circulants.

A technique for partitioning $\bold{H}^b$ to construct $\bold{H}^b_{SC}$ is the CV partitioning \cite{mitch_sc, homa_sc}. In this technique, a vector of ascending non-negative integers, $\boldsymbol{\zeta} = [\zeta_0 \text{ } \zeta_1 \text{ } \dots \text{ } \zeta_{\gamma-1}]$, is used to partition $\bold{H}^b$ into $\bold{H}^b_0$ and $\bold{H}^b_1$. The matrix $\bold{H}^b_0$ has all the circulants in $\bold{H}^b$ with the indices $\{(i, j): j<\zeta_i\}$, and zero circulants elsewhere, and the matrix $\bold{H}^b_1$ is $\bold{H}^b-\bold{H}^b_0$. Another recently introduced partitioning technique is the MO partitioning \cite{homa_mo}, in which $\bold{H}^b$ is partitioned into $\bold{H}^b_0$ and $\bold{H}^b_1$ such that the overlap of each pair of rows of circulants in both $\bold{H}^b_0$ and $\bold{H}^b_1$ is minimized. Moreover, the MO partitioning assumes balanced partitioning between $\bold{H}^b_0$ and $\bold{H}^b_1$, and also balanced distribution of circulants among the rows in each of them. The MO partitioning significantly outperforms the CV partitioning \cite{homa_mo}. In this paper, we demonstrate that the new OO-CPO technique outperforms the MO technique.

GASTs are the objects that dominate the error floor of NB codes on asymmetric channels, e.g., practical Flash channels. We recall the definitions of GASTs and unlabeled GASTs.

\begin{definition}\label{def_gast}
(cf. \cite{ahh_jsac}) Consider a subgraph induced by a subset $\mathcal{V}$ of VNs in the Tanner graph of an NB code. Set all the VNs in $\mathcal{V}$ to values $\in$ GF($q$)$\backslash \{0\}$ and set all other VNs to $0$. The set $\mathcal{V}$ is said to be an $(a, b, d_1, d_2, d_3)$ \textbf{general absorbing set of type two (GAST)} over GF($q$) if the size of $\mathcal{V}$ is $a$, the number of unsatisfied CNs connected to $\mathcal{V}$ is $b$, the number of degree-$1$ (resp., $2$ and $> 2$) CNs connected to $\mathcal{V}$ is $d_1$ (resp., $d_2$ and  $d_3$), $d_2 > d_3$, all the unsatisfied CNs connected to $\mathcal{V}$ (if any) have either degree $1$ or degree $2$, and each VN in $\mathcal{V}$ is connected to strictly more satisfied than unsatisfied neighboring CNs (for some set of given VN values).
\end{definition}

\begin{definition}\label{def_ugas}
(cf. \cite{ahh_jsac}) Let  $\mathcal{V}$ be a subset of VNs in the unlabeled Tanner graph of an NB code. Let $\mathcal{O}$ (resp., $\mathcal{T}$ and $\mathcal{H}$) be the set of degree-$1$ (resp., $2$ and $> 2$) CNs connected to $\mathcal{V}$. This graphical configuration is an $(a, d_1, d_2, d_3)$ \textbf{unlabeled GAST (UGAST)} if it satisfies the following two conditions:
\vspace{-0.2em}
\begin{enumerate}
\item $|\mathcal{V}| = a$, $\vert{\mathcal{O}}\vert=d_1$, $\vert{\mathcal{T}}\vert=d_2$, $\vert{\mathcal{H}}\vert=d_3$, and $d_2 > d_3$.
\item Each VN in $\mathcal{V}$ is connected to strictly more neighbors in $\{\mathcal{T} \cup \mathcal{H}\}$ than in $\mathcal{O}$.
\end{enumerate}
\end{definition}

Examples on GASTs and UGASTs are shown in Fig. \ref{Fig_gasts}.

The WCM framework \cite{ahh_jsac, ahh_tit} removes a GAST by careful processing of its edge weights. The key idea of this framework is to represent the GAST in terms of a set of submatrices of the GAST adjacency matrix. These submatrices are the WCMs, and they have the property that once the edge weights of the GAST are processed to force the null spaces of the WCMs to have a particular property, the GAST is completely removed from the Tanner graph of the NB code (see \cite{ahh_jsac} and \cite{ahh_tit}).

\vspace{-0.1em}
\section{OO Partitioning: Theoretical Analysis}\label{sec_oo}
\vspace{-0.1em}

In order to simultaneously reduce the number of multiple UGASTs, we determine a common substructure in them, then minimize the number of instances of this substructure in the unlabeled Tanner graph of the SC code (the graph of $\bold{H}^b_{SC}$) \cite{homa_sc}. We propose our new partitioning scheme in the context of SC codes with $\gamma = 3$ (the scheme can be extended to higher column weights). For the overwhelming majority of dominant GASTs we have encountered in NB codes with $\gamma = 3$ simulated over Flash channels, the $(3, 3, 3, 0)$ UGAST occurs as a common substructure most frequently \cite{ahh_jsac, ahh_tit} (see Fig.~\ref{Fig_gasts}). Thus, we focus on the removal of $(3, 3, 3, 0)$ UGASTs.

\begin{figure}[H]
\vspace{-1.5em}
\center
\includegraphics[width=3.3in]{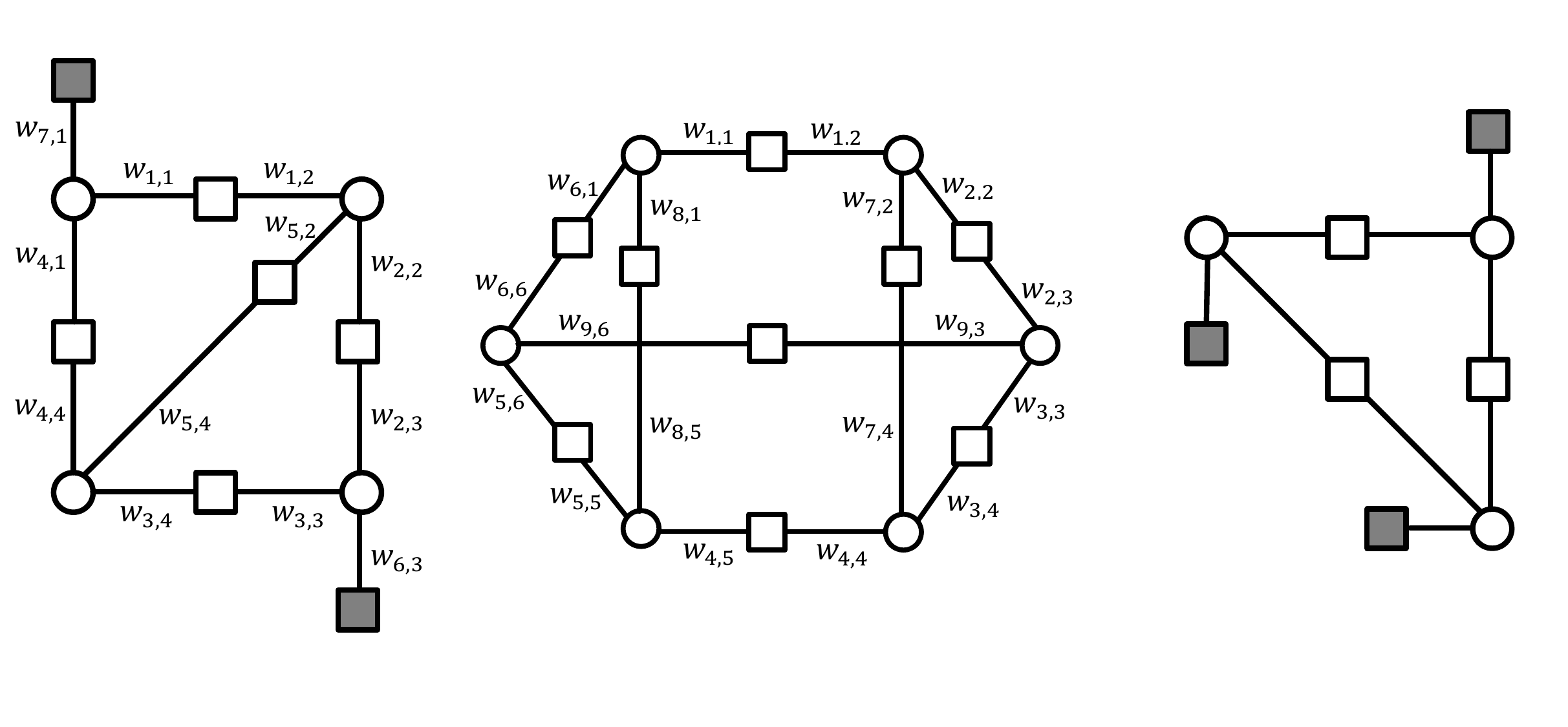}\vspace{-1.6em}
\vspace{-0.5em}
\text{\hspace{4em}\footnotesize{(a) \hspace{14em} (b)}}
\caption{(a) Two dominant GASTs for NB codes with $\gamma = 3$ over Flash; a $(4, 2, 2, 5, 0)$ and a $(6, 0, 0, 9, 0)$ GASTs. Appropriate edge weights ($w$'s) are assumed. (b) A $(3, 3, 3, 0)$ UGAST ($\gamma=3$).}
\label{Fig_gasts}
\vspace{-0.5em}
\end{figure}

A cycle of length $2z$ in the graph of $\bold{H}^{bp}_{SC}$ (the binary protograph of the SC code), which is defined by the non-zero entries $\{(h_1, \ell_1), (h_2, \ell_2), \dots, (h_{2z}, \ell_{2z})\}$ in $\bold{H}^{bp}_{SC}$, results in $p$ cycles of length $2z$ in the graph of $\bold{H}^b_{SC}$ if and only if \cite{baz_qc, fos_qc}:
\vspace{-0.1em}
\begin{align}\label{eq_cycle}
\sum_{e=1}^{z} f_{h_{2e-1}, \ell_{2e-1}} \equiv \sum_{e=1}^{z} f_{h_{2e}, \ell_{2e}} \text{ } (\text{mod } p),
\end{align}
where $f_{h, \ell}$ is the power of the circulant indexed by $(h, \ell)$ in $\bold{H}^b_{SC}$. Otherwise, this cycle results in $p/\beta$ cycle(s) of length $2z\beta$ in the graph of $\bold{H}^b_{SC}$, where $\beta$ is an integer $\geq 2$ that divides $p$ \cite{baz_qc}. It is clear from Fig. \ref{Fig_gasts}(b) that the $(3, 3, 3, 0)$ UGAST is a cycle of length $6$. Thus, and motivated by the above fact, our OO partitioning aims at deriving the overlap parameters of $\bold{H}^{bp}$ that result in the minimum number of cycles of length $6$ in the graph of $\bold{H}^{bp}_{SC}$, which is the binary protograph of the SC code. Then, we run the CPO to further reduce the number of $(3, 3, 3, 0)$ UGASTs in the graph of $\bold{H}^{b}_{SC}$ (which is the unlabeled graph of the SC code) by breaking the condition in (\ref{eq_cycle}) (with $z=3$ for cycles of length $6$) for as many cycles in the optimized graph of $\bold{H}^{bp}_{SC}$ as possible.

The goal here is to minimize the number of cycles of length $6$ in the binary protograph of the SC code via the OO partitioning of $\bold{H}^{bp}$, which is also the OO partitioning of $\bold{H}^b$. To achieve this goal, we establish a discrete optimization problem by expressing the number of cycles of length $6$ in the graph of $\bold{H}^{bp}_{SC}$ as a function of the overlap parameters and standard code parameters, then solve for the optimal overlap parameters. We start off with the following lemma.

\vspace{-0.2em}
\begin{lemma}\label{lem_total}
In the Tanner graph of an SC code with parameters $\gamma = 3$, $\kappa$, $p=1$, $m=1$, and $L$ (which is the binary protograph), the number of cycles of length $6$ is given by:
\begin{equation}\label{eq_Ftot}
F = LF_s + (L-1)F_d,\vspace{-0.1em}
\end{equation}
where $F_s$ is the number of cycles of length $6$ that have their VNs spanning only one particular replica (say $\bold{R}_1$), and $F_d$ is the number of cycles of length $6$ that have their VNs spanning two particular consecutive replicas (say $\bold{R}_1$ and $\bold{R}_2$).
\end{lemma}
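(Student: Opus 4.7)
The plan is to classify every length-$6$ cycle by the set of replicas that its three VNs occupy, and then exploit the translational symmetry of the SC construction.

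First I would note that a cycle of length $6$ involves exactly $3$ VNs and $3$ CNs alternating around the cycle, so every pair of VNs among the three is joined through a CN of the cycle. Next, I would use the $m=1$ structure: the VNs of replica $\bold{R}_r$ appear in column group $r$, and the only CNs they can reach are those in row groups $r$ and $r+1$ (the block rows occupied by $\bold{H}^{bp}_0$ stacked on $\bold{H}^{bp}_1$ inside that replica). Consequently, two VNs located in replicas $\bold{R}_{r_1}$ and $\bold{R}_{r_2}$ can share a CN only if $\{r_1, r_1+1\}\cap\{r_2,r_2+1\}\neq\emptyset$, i.e., only if $|r_1-r_2|\le 1$.

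Applying this constraint to each of the three VN pairs of the cycle forces $r_{\max}-r_{\min}\le 1$, where $r_{\min},r_{\max}$ are the extreme replica indices among the three VNs. Hence every length-$6$ cycle of $\bold{H}^{bp}_{SC}$ falls into exactly one of two classes: (i) all three VNs lie in a single replica $\bold{R}_r$, or (ii) the VNs lie in two consecutive replicas $\bold{R}_r$ and $\bold{R}_{r+1}$ with at least one VN in each. In particular, no length-$6$ cycle can span three or more replicas.

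Finally I would invoke the translational symmetry of the SC protograph. Every replica $\bold{R}_r$ is an identical copy of $\bold{R}_1$, so the number of length-$6$ cycles contained entirely in $\bold{R}_r$ does not depend on $r$; call this count $F_s$, contributing $LF_s$. Similarly, every consecutive pair $(\bold{R}_r,\bold{R}_{r+1})$, for $r=1,\dots,L-1$, is an identical copy of $(\bold{R}_1,\bold{R}_2)$, so the number of cycles whose VNs genuinely span both replicas of the pair is a constant $F_d$, contributing $(L-1)F_d$. Adding the two disjoint classes yields $F=LF_s+(L-1)F_d$.

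The main obstacle is the structural step of ruling out cycles whose VNs span three or more replicas; once the "$|r_1-r_2|\le 1$" connectivity constraint is in hand, the rest is a careful bookkeeping argument. The only other subtlety is in the definition of $F_d$: it must count only cycles with VNs actually in both replicas of the pair, so that single-replica cycles (already counted in $LF_s$) are not double counted in $(L-1)F_d$.
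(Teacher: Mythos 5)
Your proof is correct and follows essentially the same route as the paper: first establish that the three VNs of a length-$6$ cycle can span at most two consecutive replicas, then invoke the translational symmetry of the $L$ replicas and the $L-1$ consecutive pairs to get $F = LF_s + (L-1)F_d$. The only difference is that the paper imports the "at most two consecutive replicas" fact from \cite[Lemma 1]{homa_mo}, whereas you derive it directly from the $m=1$ row-group connectivity constraint, which makes your argument self-contained but not substantively different.
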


\begin{proof}
From \cite[Lemma 1]{homa_mo}, the maximum number of consecutive replicas spanned by the $3$ VNs of a cycle of length $6$ in an SC code with $m = 1$ is $2$. Thus, the VNs of any cycle of length $6$ span either one replica or two consecutive replicas. Since there exist $L$ replicas and $L-1$ distinct pairs of consecutive replicas, and because of the repetitive nature of the SC code, (\ref{eq_Ftot}) follows.
\end{proof}
\vspace{-0.4em}

Let the \textit{\textbf{overlapping set}} of $x$ rows of a binary matrix be the set of positions in which all the $x$ rows have $1$'s simultaneously (overlap). Now, define the overlap parameters as follows:
\begin{itemize}
\item $t_i$ (resp., $t_{i+3}$), $0 \leq i \leq 2$, is the number of $1$'s in row $i$ of $\bold{H}^{bp}_0$ (resp., $\bold{H}^{bp}_1$). From the definitions of $\bold{H}^{bp}_0$ and $\bold{H}^{bp}_1$, $t_{i+3}=\kappa-t_i$.

\item $t_{i_1,i_2}$, $0 \leq i_1 \leq 2$, $0 \leq i_2 \leq 2$, and $i_2 > i_1$, is the size of the overlapping set of rows $i_1$ and $i_2$ of $\bold{H}^{bp}_0$.

\item $t_{i_3,i_4}$, $i_3=i_1+3$, $i_4=i_2+3$, and $i_4 > i_3$, is the size of the overlapping set of rows $i_1$ and $i_2$ of $\bold{H}^{bp}_1$. From the definitions, $t_{i_3,i_4}=\kappa-t_{i_1}-t_{i_2}+t_{i_1,i_2}$.

\item $t_{0,1,2}$ (resp., $t_{3,4,5}$) is the size of the overlapping set of rows $0$, $1$, and $2$ of $\bold{H}^{bp}_0$ (resp., $\bold{H}^{bp}_1$). Moreover, $t_{3,4,5} = \kappa - (t_0+t_1+t_2) + (t_{0,1}+t_{0,2}+t_{1,2})-t_{0,1,2}$.
\end{itemize}
Let $[x]^+=\max(x, 0)$. We define the following functions to be used in Theorem \ref{th_fsfd}:
\vspace{-0.1em}
\begin{align}
&\mathcal{A}(t_{0,1}, t_{0,2}, t_{1,2}, t_{0,1,2}) = \left [ t_{0,1,2}(t_{0,1,2}-1)(t_{1,2}-2) \right ]^+ \nonumber \\ &+ \left [ t_{0,1,2}(t_{0,2}-t_{0,1,2})(t_{1,2}-1) \right ]^+ \nonumber \\ &+ \left [ (t_{0,1}-t_{0,1,2})t_{0,1,2}(t_{1,2}-1) \right ] ^+ \nonumber \\ &+ \left [ (t_{0,1}-t_{0,1,2})(t_{0,2}-t_{0,1,2})t_{1,2} \right ]^+,
\end{align}
\begin{align}
&\mathcal{B}(t_0, t_1, t_2, t_{0,1}, t_{0,2}, t_{1,2}, t_{0,1,2}) \nonumber \\ &= \left [ t_{0,1,2}(t_{0,1}-t_{0,1,2})(t_1-t_{1,2}-1) \right ]^+ \nonumber \\ &+ \left [ t_{0,1,2}(t_0-t_{0,1}-t_{0,2}+t_{0,1,2})(t_1-t_{1,2}) \right ]^+ \nonumber \\ &+ \left [ (t_{0,1}-t_{0,1,2})(t_{0,1}-t_{0,1,2}-1)(t_1-t_{1,2}-2) \right ]^+ \nonumber \\ &+ \left [ (t_{0,1}-t_{0,1,2})(t_0-t_{0,1}-t_{0,2}+t_{0,1,2})(t_1-t_{1,2}-1) \right ]^+ \nonumber \\ &+ \left [ t_{0,1,2}(t_{0,2}-t_{0,1,2})(t_0-t_{0,1}-1) \right ]^+ \nonumber \\ &+ \left [ t_{0,1,2}(t_2-t_{0,2}-t_{1,2}+t_{0,1,2})(t_0-t_{0,1}) \right ]^+ \nonumber \\ &+ \left [ (t_{0,2}-t_{0,1,2})(t_{0,2}-t_{0,1,2}-1)(t_0-t_{0,1}-2) \right ]^+ \nonumber \\ &+ \left [ (t_{0,2}-t_{0,1,2})(t_2-t_{0,2}-t_{1,2}+t_{0,1,2})(t_0-t_{0,1}-1) \right ]^+ \nonumber \\ &+ \left [ t_{0,1,2}(t_{1,2}-t_{0,1,2})(t_2-t_{0,2}-1) \right ]^+ \nonumber \\ &+ \left [ t_{0,1,2}(t_1-t_{0,1}-t_{1,2}+t_{0,1,2})(t_2-t_{0,2}) \right ]^+ \nonumber \\ &+ \left [ (t_{1,2}-t_{0,1,2})(t_{1,2}-t_{0,1,2}-1)(t_2-t_{0,2}-2) \right ]^+ \nonumber \\ &+ \left [ (t_{1,2}-t_{0,1,2})(t_1-t_{0,1}-t_{1,2}+t_{0,1,2})(t_2-t_{0,2}-1) \right ]^+, \\
&\mathcal{C}(\kappa, t_0, t_1, t_2, t_{0,1}, t_{0,2}, t_{1,2}, t_{0,1,2}) \nonumber \\ &= \left [ t_{3,4}t_{0,1,2}(t_{1,2}-1) \right ]^+ + \left [ t_{3,4}(t_{0,2}-t_{0,1,2})t_{1,2} \right ]^+ \nonumber \\ &+ \left [ t_{3,5}t_{0,1,2}(t_{0,1}-1) \right ]^+ + \left [ t_{3,5}(t_{1,2}-t_{0,1,2})t_{0,1} \right ]^+ \nonumber \\ &+ \left [ t_{4,5}t_{0,1,2}(t_{0,2}-1) \right ]^+ + \left [ t_{4,5}(t_{0,1}-t_{0,1,2})t_{0,2} \right ]^+, \textit{ and} \\
&\mathcal{D}(t_0, t_1, t_2, t_{0,1}, t_{0,2}, t_{1,2}, t_{0,1,2}) \nonumber \\ &= \left [ t_{0,1}(t_2-t_{0,2}-t_{1,2}+t_{0,1,2})(t_2-t_{1,2}-1) \right ]^+ \nonumber \\ &+ \left [ t_{0,1}(t_{1,2}-t_{0,1,2})(t_2-t_{1,2}) \right ]^+ \nonumber \\ & + \left [ t_{0,2}(t_1-t_{0,1}-t_{1,2}+t_{0,1,2})(t_1-t_{0,1}-1) \right ]^+ \nonumber \\ &+ \left [ t_{0,2}(t_{0,1}-t_{0,1,2})(t_1-t_{0,1}) \right ]^+ \nonumber \\ &+ \left [ t_{1,2}(t_0-t_{0,1}-t_{0,2}+t_{0,1,2})(t_0-t_{0,2}-1) \right ]^+ \nonumber \\ &+ \left [ t_{1,2}(t_{0,2}-t_{0,1,2})(t_0-t_{0,2}) \right ]^+.
\end{align}
\vspace{-1.3em}

Theorem \ref{th_fsfd} uses combinatorics to give the exact expressions for $F_s$ and $F_d$ in terms of the above overlap parameters.

\vspace{-0.1em}
\begin{thm}\label{th_fsfd}
In the Tanner graph of an SC code with parameters $\gamma = 3$, $\kappa$, $p=1$, $m=1$, and $L$ (which is the binary protograph), $F_s$ and $F_d$ are computed as follows:
\vspace{-0.2em}\begin{align}
F_s &= F_{s,0} + F_{s,1} + F_{s,2} + F_{s,3}, \textit{ and} \\ \vspace{-0.2em}
F_d &= F_{d,0} + F_{d,1} + F_{d,2} + F_{d,3},\vspace{-0.6em}
\end{align}
where $F_{s,0}$, $F_{s,1}$, $F_{s,2}$, $F_{s,3}$, $F_{d,0}$, $F_{d,1}$, $F_{d,2}$, and $F_{d,3}$ are:
\vspace{-0.3em}\begin{align}
F_{s,0} &= \mathcal{A}(t_{0,1}, t_{0,2}, t_{1,2}, t_{0,1,2}), \nonumber \\ 
F_{s,1} &= \mathcal{A}(t_{3,4}, t_{3,5}, t_{4,5}, t_{3,4,5}), \nonumber \\
F_{s,2} &= \mathcal{B}(t_0, t_1, t_2, t_{0,1}, t_{0,2}, t_{1,2}, t_{0,1,2}), \nonumber \\
F_{s,3} &= \mathcal{B}(t_3, t_4, t_5, t_{3,4}, t_{3,5}, t_{4,5}, t_{3,4,5}), \\
F_{d,0} &= \mathcal{C}(\kappa, t_0, t_1, t_2, t_{0,1}, t_{0,2}, t_{1,2}, t_{0,1,2}), \nonumber \\ 
F_{d,1} &= \mathcal{C}(\kappa, t_3, t_4, t_5, t_{3,4}, t_{3,5}, t_{4,5}, t_{3,4,5}), \nonumber \\
F_{d,2} &= \mathcal{D}(t_0, t_1, t_2, t_{0,1}, t_{0,2}, t_{1,2}, t_{0,1,2}), \textit{ and} \nonumber \\
F_{d,3} &= \mathcal{D}(t_3, t_4, t_5, t_{3,4}, t_{3,5}, t_{4,5}, t_{3,4,5}).
\end{align}
\end{thm}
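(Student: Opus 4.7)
The plan is to invoke Lemma~\ref{lem_total} to reduce the problem to computing $F_s$ and $F_d$, and then enumerate each by classifying the three CNs of a $6$-cycle and doing casework on a well-chosen ``type'' of each VN. The key structural observation is that, since the underlying CB code has no zero circulants, the protograph $\mathbf{H}^{bp}$ is the all-ones $3\times \kappa$ matrix; hence for each column $c$ and each $i\in\{0,1,2\}$ exactly one of $\mathbf{H}^{bp}_0[i,c]$, $\mathbf{H}^{bp}_1[i,c]$ equals $1$. This lets me assign to each column a \emph{type} $T_0(c)\subseteq\{0,1,2\}$ and express every cross-overlap count in terms of the given parameters via inclusion--exclusion, e.g.\ $|\{c:i\in T_0(c),\,j\notin T_0(c)\}|=t_i-t_{i,j}$ and $|\{c:T_0(c)=\{i\}\}|$ is the usual alternating sum.

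For $F_s$, a $6$-cycle inside one replica (with $p=1$) is a choice of a $3$-subset $\{c_1,c_2,c_3\}$ of the replica's six rows together with three distinct VNs $v_{12},v_{13},v_{23}$ where $v_{ij}$ lies in the overlap of rows $c_i$ and $c_j$. The overlap of a row $k\in\{0,1,2\}$ with its complement $k+3\in\{3,4,5\}$ is empty, so only $8$ CN triples are admissible; these partition as $\{0,1,2\}$ (producing $F_{s,0}$), $\{3,4,5\}$ ($F_{s,1}$), the three ``two top, one bottom'' triples ($F_{s,2}$), and their mirror images ($F_{s,3}$). For the $\{0,1,2\}$ triple I would enumerate $(v_{12},v_{13},v_{23})$ by splitting each VN into ``lies in the triple overlap $t_{0,1,2}$'' vs.\ ``lies in the pairwise overlap only,'' and subtract previously selected VNs from the pool for the last coordinate when they might coincide. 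The four resulting subcases are exactly the four terms of $\mathcal{A}$. The $\{3,4,5\}$ triple is dual, giving $F_{s,1}$. For each mixed triple I would use the cross-overlap identity above, iterate over the three choices of the ``odd row out,'' and for each CN configuration split the two VNs incident to the shared $\mathbf{H}^{bp}_0$-row into four subcases by $T_0$-type, recovering the $12$ terms of $\mathcal{B}$.

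For $F_d$, only CNs in rows $\{3,4,5\}$ are reachable from both $\mathbf{R}_1$ (via $\mathbf{H}^{bp}_1$) and $\mathbf{R}_2$ (via $\mathbf{H}^{bp}_0$), so any CN of the cycle that touches VNs from both replicas must lie there. The three VNs split as $(2,1)$ or $(1,2)$ across $\{\mathbf{R}_1,\mathbf{R}_2\}$, and the two VNs in the dominant replica share exactly one CN. In \emph{Case II} this shared CN itself lies in $\{3,4,5\}$, so all three cycle CNs are in the middle rows; enumerating by the identity of the shared middle CN and casework on whether the first dominant-replica VN lies in the triple overlap of $\mathbf{H}^{bp}_1$ (resp.\ $\mathbf{H}^{bp}_0$) yields the six terms of $\mathcal{C}$, giving $F_{d,0}$ and, by the $\mathbf{R}_1\leftrightarrow\mathbf{R}_2$ swap, $F_{d,1}$. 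In \emph{Case I} the shared CN is an ``outer'' row of the dominant replica (in $\{0,1,2\}$ for an $\mathbf{R}_1$-shared CN, or in $\{6,7,8\}$ for an $\mathbf{R}_2$-shared one); the compatibility constraint $c_b\notin\{c_a-3,c_c-3\}$ pins it uniquely once the two middle CNs are fixed, and the analogous casework on VN types produces the six terms of $\mathcal{D}$, giving $F_{d,2}$ and $F_{d,3}$.

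The main obstacle is the systematic bookkeeping: for each CN configuration one must check that the casework on $T_0$-types is exhaustive and non-overlapping, and that the ``$-1$'' and ``$-2$'' corrections inside the product factors exactly cancel the overcount from VNs that may coincide. The $[\,\cdot\,]^+$ truncation is the correct way to handle degenerate parameter regimes (e.g.\ $t_{1,2}<2$) that would otherwise make a product formally negative; it is automatic because each factor is the size of a pool from which previously selected VNs have been removed, and such a size is nonnegative by construction.
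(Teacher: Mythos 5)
Your proposal is correct and follows essentially the same route as the paper's proof: decomposing $F_s$ and $F_d$ each into the same four cases according to the placement of the cycle's CNs (and, for $F_d$, its VNs) relative to $\mathbf{H}_0^{bp}$ and $\mathbf{H}_1^{bp}$ within one or two replicas, and then counting each case by choosing one column position per pair of cycle rows with inclusion--exclusion on the triple overlap to handle coinciding positions. Your added formalism (column types, the empty overlap of row $k$ with row $k+3$ yielding exactly $8$ admissible CN triples, and the unique choice of the outer CN in the $\mathcal{D}$ cases) just makes explicit what the paper leaves implicit.
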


\begin{proof}
The term $F_s$ represents the number of cycles of length $6$ that have their VNs spanning only one replica. The non-zero submatrix of a replica is $\left [\bold{H}_0^{bpT} \text{ } \bold{H}_1^{bpT} \right ]^T$. There are four possible cases of arrangement for the CNs of a cycle of length $6$ that has its VNs spanning only one replica. These cases are listed below:
\begin{enumerate}
\item All the three CNs are within $\bold{H}_0^{bp}$. The number of cycles of length $6$ that have all their CNs inside $\bold{H}_0^{bp}$ is denoted by $F_{s,0}$.
\item All the three CNs are within $\bold{H}_1^{bp}$. The number of cycles of length $6$ that have all their CNs inside $\bold{H}_1^{bp}$ is denoted by $F_{s,1}$.
\item Two CNs are within $\bold{H}_0^{bp}$, and one CN is within $\bold{H}_1^{bp}$. The number of cycles of length $6$ in this case is denoted by $F_{s,2}$.
\item Two CNs are within $\bold{H}_1^{bp}$, and one CN is within $\bold{H}_0^{bp}$. The number of cycles of length $6$ in this case is denoted by $F_{s,3}$.
\end{enumerate}
These four different cases of arrangement are illustrated in the upper panel of Fig. \ref{Fig_oo}. Next, we find the number of cycles of length $6$ in each of the four cases in terms of the overlap parameters and standard code parameters, particularly, $\{\kappa,t_0,t_1,t_2,t_{0,1},t_{0,2},t_{1,2},t_{0,1,2}\}$.

In case 1, a cycle of length $6$ is comprised of an overlap between rows $0$ and $1$, an overlap between rows $0$ and $2$, and an overlap between rows $1$ and $2$ of $\bold{H}_0^{bp}$. Note that each overlap must have a distinct associated column index (position) to result in a valid cycle of length $6$. The overlap between rows $0$ and $1$ can be selected among $t_{0,1}$ possible choices. Among these $t_{0,1}$ overlaps, there exist $t_{0,1,2}$ overlaps that have the same associated column indices as some overlaps between other pairs of rows. Thus, these $t_{0,1,2}$ overlaps need to be considered separately to avoid incorrect counting. The same argument applies when we choose the overlap between the other two pairs of rows. As a result, the number of different ways to choose these overlaps and form a cycle of length $6$ is $F_{s,0}=\mathcal{A}(t_{0,1},t_{0,2},t_{1,2},t_{0,1,2})$, and $\mathcal{A}$ is defined in (3).

In case 2, the number of cycles of length $6$, $F_{s,1}$, is computed exactly as in case 1, but using the overlap parameters of the matrix $\bold{H}_1^{bp}$. Thus, $F_{s,1}=\mathcal{A}(t_{3,4},t_{3,5},t_{4,5},t_{3,4,5})$.

In case 3, one overlap solely belongs to $\bold{H}_0^{bp}$, and the two other overlaps cross $\bold{H}_0^{bp}$ to $\bold{H}_1^{bp}$ (see Fig. \ref{Fig_oo}). For the overlap in $\bold{H}_0^{bp}$, we have three options to choose two rows out of three. For example, suppose that the overlap is chosen between rows $0$ and $1$ of $\bold{H}_0^{bp}$. Then, the cross overlaps will be between row $0$ of $\bold{H}_0^{bp}$ and row $2$ of $\bold{H}_1^{bp}$, and also between row $1$ of $\bold{H}_0^{bp}$ and row $2$ of $\bold{H}_1^{bp}$. Note that since $\bold{H}_0^{bp}$ and $\bold{H}_1^{bp}$ are the result of partitioning $\bold{H}^{bp}$, there are no overlaps between row $i$ of $\bold{H}_0^{bp}$ and row $i$ of $\bold{H}_1^{bp}$, $0 \leq i \leq 2$. Based on which option of the three is chosen, the number of cycles of length $6$ is computed using the overlap parameters of $\bold{H}_0^{bp}$ and $\bold{H}_1^{bp}$. The total number of cycles of length $6$ in this case is $F_{s,2}=\mathcal{B}(t_0,t_1,t_2,t_{0,1},t_{0,2},t_{1,2},t_{0,1,2})$, and $\mathcal{B}$ is defined in (4).

In case 4, the number of cycles of length $6$, $F_{s,3}$, is computed as in case 3. The only difference is that in case 4, one overlap solely belongs to $\bold{H}_1^{bp}$, and the two other overlaps cross $\bold{H}_0^{bp}$ to $\bold{H}_1^{bp}$ (see Fig. \ref{Fig_oo}). Consequently, $F_{s,3}=\mathcal{B}(t_3,t_4,t_5,t_{3,4},t_{3,5},t_{4,5},t_{3,4,5})$.

On the other hand, the term $F_d$ represents the number of cycles of length $6$ that have their VNs spanning two consecutive replicas. The non-zero submatrix of two consecutive replicas is:
\begin{equation*}
\left[\begin{array}{cc}
\bold{H}_0^{bp}&\bold{0}\\
\bold{H}_1^{bp}&\bold{H}_0^{bp}\\
\bold{0}&\bold{H}_1^{bp}
\end{array}\right].
\end{equation*}
There are four possible cases of arrangement for the CNs and VNs of a cycle of length $6$ that has its VNs spanning two consecutive replicas. These cases are listed below:

\begin{enumerate}
\item All the three CNs are within $[\bold{H}_1^{bp} \text{ } \bold{H}_0^{bp}]$, two VNs belong to the first replica, and one VN belongs to the second replica. The number of cycles of length $6$ in this case is denoted by $F_{d,0}$.
\item All the three CNs are within $[\bold{H}_1^{bp} \text{ } \bold{H}_0^{bp}]$, one VN belongs to the first replica, and two VNs belong to the second replica. The number of cycles of length $6$ in this case is denoted by $F_{d,1}$.
\item One CN is within $[\bold{H}_0^{bp}\text{ }\bold{0}]$, and two CNs are within $[\bold{H}_1^{bp} \text{ } \bold{H}_0^{bp}]$. Besides, two VNs belong to the first replica, and one VN belongs to the second replica. The number of cycles of length $6$ in this case is denoted by $F_{d,2}$.
\item Two CNs are within $[\bold{H}_1^{bp} \text{ } \bold{H}_0^{bp}]$, and one CN is within $[\bold{0}\text{ }\bold{H}_1^{bp}]$. Besides, one VN belongs to the first replica, and two VNs belong to the second replica. The number of cycles of length $6$ in this case is denoted by $F_{d,3}$.
\end{enumerate}
These four different cases of arrangement are illustrated in the lower panel of Fig. \ref{Fig_oo}. Next, we find the number of cycles of length $6$ in each of the four cases in terms of the overlap parameters and standard code parameters, particularly, $\{\kappa,t_0,t_1,t_2,t_{0,1},t_{0,2},t_{1,2},t_{0,1,2}\}$.

In case 1, two overlaps belong to $\bold{H}_1^{bp}$ in the first replica, and one overlap belongs to $\bold{H}_0^{bp}$ in the second replica (see Fig.~\ref{Fig_oo}). For the overlap in $\bold{H}_0^{bp}$, we have three options to choose two rows out of three. For each option, the two overlaps inside $\bold{H}_1^{bp}$ must have distinct associated column indices (positions) to result in a valid cycle of length $6$ (the overlap inside $\bold{H}_0^{bp}$ cannot have the same column index as any of the other two overlaps). Thus, the number of different ways to choose these overlaps and form a cycle of length $6$ is given by $F_{d,0}=\mathcal{C}(\kappa,t_0,t_1,t_2,t_{0,1},t_{0,2},t_{1,2},t_{0,1,2})$, and $\mathcal{C}$ is defined in (5).

In case 2, the number of cycles of length $6$, $F_{d,1}$, is computed as in case 1. The only difference is that in case~2, one overlap belongs to $\bold{H}_1^{bp}$ in the first replica, and two overlaps belong to $\bold{H}_0^{bp}$ in the second replica (see Fig. \ref{Fig_oo}). Thus, $F_{d,1}=\mathcal{C}(\kappa,t_3,t_4,t_5,t_{3,4},t_{3,5},t_{4,5},t_{3,4,5})$.

In case 3, one overlap solely belongs to $\bold{H}_0^{bp}$ in the second replica, and the two other overlaps cross $\bold{H}_0^{bp}$ to $\bold{H}_1^{bp}$ in the first replica (see Fig. \ref{Fig_oo}). For the overlap in $\bold{H}_0^{bp}$ of the second replica, we have three options to choose two rows out of three. The two overlaps that belong to the first replica must have distinct corresponding column indices (positions). Consequently, the total number of cycles of length $6$ in this case is given by $F_{d,2}=\mathcal{D}(t_0,t_1,t_2,t_{0,1},t_{0,2},t_{1,2},t_{0,1,2})$, and $\mathcal{D}$ is defined in (6).

In case 4, the number of cycles of length $6$, $F_{d,3}$, is computed as in case 3. The only difference is that in case~4, one overlap solely belongs to $\bold{H}_1^{bp}$ in the first replica, and the two other overlaps cross $\bold{H}_0^{bp}$ to $\bold{H}_1^{bp}$ in the second replica (see Fig. \ref{Fig_oo}). Thus, $F_{d,3}=\mathcal{D}(t_3,t_4,t_5,t_{3,4},t_{3,5},t_{4,5},t_{3,4,5})$.

Note that the operator $[.]^+$ is used to avoid counting options that are not valid.
\end{proof}
\vspace{-0.2em}

\begin{figure}[H]
\vspace{-1.2em}
\center
\includegraphics[trim={1.3in 7.6in 1.2in 0.9in},clip,width=3.5in]{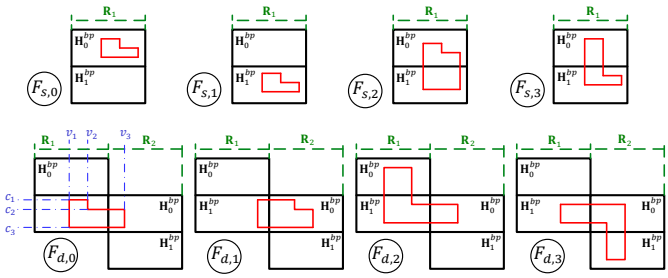}
\vspace{-1.5em}
\caption{Different cases for the cycle of length $6$ (in red) in a $\gamma=3$ SC binary protograph. The upper panel (resp., lower panel) is for the case of the VNs spanning $\bold{R}_1$ (resp., $\bold{R}_1$ and $\bold{R}_2$).}
\label{Fig_oo}
\vspace{-0.3em}
\end{figure}

The main idea of Theorem \ref{th_fsfd} is that both $F_s$ and $F_d$ can be computed by decomposing each of them into four more tractable terms. Each term represents a distinct case for the existence of a cycle of length $6$ in the SC binary protograph, and the union of these cases covers all the existence possibilities. Each case is characterized by the locations of the CNs and VNs comprising the cycle with respect to $\bold{H}^{bp}_0$ and $\bold{H}^{bp}_1$ of the replica $\bold{R}_1$ (for $F_s$) or the replicas $\bold{R}_1$ and $\bold{R}_2$ (for $F_d$). Fig. \ref{Fig_oo} illustrates these eight cases, along with the terms in $F_s$ and $F_d$ that corresponds to each case.

\vspace{-0.1em}
\begin{remark}
Consider the special situation of $t_{0,1,2}=0$ (rows $0$, $1$, and $2$ in $\bold{H}^{bp}_0$ do not have a $3$-way overlap). Here, $F_{s,0}$ reduces to $t_{0,1}t_{0,2}t_{1,2}$, which is simply the number of ways to select one position from the overlapping set of each pair.
\end{remark}

Now, define $F^*$ to be the minimum number of cycles of length $6$ in the graph of $\bold{H}^{bp}_{SC}$ (the binary protograph). Thus, our \textit{\textbf{discrete optimization problem}} is formulated as follows:
\begin{equation}\label{eq_optp}
F^* = \min_{t_0, t_1, t_2, t_{0,1}, t_{0,2}, t_{1,2}, t_{0,1,2}} F.
\end{equation}

The constraints of our optimization problem are the conditions under which the overlap parameters are valid. Thus, these constraints on the seven parameters in (\ref{eq_optp}) are:
\begin{align}\label{eq_const}
&0 \leq t_0 \leq \kappa, \text{ } 0 \leq t_{0,1} \leq t_0, \text{ } t_{0,1} \leq t_1 \leq \kappa-t_0+t_{0,1}, \nonumber \\
&0 \leq t_{0,1,2} \leq t_{0,1}, \text{ } t_{0,1,2} \leq t_{0,2} \leq t_0-t_{0,1}+t_{0,1,2}, \nonumber \\
&t_{0,1,2} \leq t_{1,2} \leq t_1-t_{0,1}+t_{0,1,2}, \nonumber \\
&t_{0,2}+t_{1,2}-t_{0,1,2} \leq t_2 \leq \kappa-t_0-t_1+t_{0,1}+t_{0,2}+t_{1,2} \nonumber \\ &-t_{0,1,2}, \textit{ and }\left \lfloor {3\kappa}/{2} \right \rfloor \leq t_0+t_1+t_2 \leq \left \lceil {3\kappa}/{2} \right \rceil.
\end{align}
The last constraint in (\ref{eq_const}) guarantees balanced partitioning between $\bold{H}^{bp}_{0}$ and $\bold{H}^{bp}_{1}$, and it is needed to prevent the case that a group of non-zero elements (a group of $1$'s) in either $\bold{H}^{bp}_{0}$ or $\bold{H}^{bp}_{1}$ are involved in significantly more cycles than the remaining non-zero elements (the remaining $1$'s). The solution of our optimization problem is not unique. However, since all the solutions result in the same number of OO partitioning choices and the same $F^*$, we work with one of these solutions, and call it an optimal vector, $\bold{t}^{*} = [t^*_0 \text{ } t^*_1 \text{ } t^*_2 \text{ } t^*_{0,1} \text{ } t^*_{0,2} \text{ } t^*_{1,2} \text{ } t^*_{0,1,2}]$.

Lemma \ref{lem_choi} gives the total number of OO partitioning choices.

\begin{lemma}\label{lem_choi}
The total number of OO partitioning choices for an SC code with parameters $\gamma = 3$, $\kappa$, $p=1$, $m=1$, and $L$ (which is the binary protograph) given an optimal vector $\bold{t}^*$ is given by:
\vspace{-0.4em}
\begin{align}\label{eq_choi}
\mathcal{N} = {\alpha} &\binom{\kappa}{t^*_0} \binom{t^*_{0}}{t^*_{0,1}} \binom{\kappa-t^*_{0}}{t^*_{1}-t^*_{0,1}} \binom{t^*_{0,1}}{ t^*_{0,1,2}} \binom{t^*_{0}-t^*_{0,1}}{t^*_{0,2}-t^*_{0,1,2}} \nonumber \\ &\binom{t^*_{1}-t^*_{0,1}}{t^*_{1,2}-t^*_{0,1,2}} \binom{\kappa-t^*_{0}-t^*_{1}+t^*_{0,1}}{t^*_{2}-t^*_{0,2}-t^*_{1,2}+t^*_{0,1,2}},
\end{align}
where $\alpha$ is the number of distinct solutions (optimal vectors).
\end{lemma}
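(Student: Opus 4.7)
Since $\bold{H}^{bp}_{1}$ is completely determined by $\bold{H}^{bp}_{0}$ via $\bold{H}^{bp}_{1}=\bold{H}^{bp}-\bold{H}^{bp}_{0}$, it suffices to count admissible choices of $\bold{H}^{bp}_{0}$. The plan is to first fix a single optimal vector $\bold{t}^{*}$ and count the number of binary $3\times\kappa$ matrices $\bold{H}^{bp}_{0}$ whose row weights, pairwise overlap sizes, and triple overlap size match $\bold{t}^{*}$; then to multiply by $\alpha$, since the sets of $\bold{H}^{bp}_{0}$ associated with distinct optimal vectors are pairwise disjoint (the vector $\bold{t}^{*}$ is read off from $\bold{H}^{bp}_{0}$ unambiguously) and, by the hypothesis of the lemma, all $\alpha$ such sets have the same cardinality.

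For a fixed $\bold{t}^{*}$, I would count row by row. First, the $t^{*}_{0}$ ones of row $0$ may sit in any $t^{*}_{0}$ of the $\kappa$ columns, contributing the factor $\binom{\kappa}{t^{*}_{0}}$. Next, row $1$ must place exactly $t^{*}_{0,1}$ of its $t^{*}_{1}$ ones in the $t^{*}_{0}$ columns already occupied by row $0$, and the remaining $t^{*}_{1}-t^{*}_{0,1}$ ones in the $\kappa-t^{*}_{0}$ free columns; this contributes $\binom{t^{*}_{0}}{t^{*}_{0,1}}\binom{\kappa-t^{*}_{0}}{t^{*}_{1}-t^{*}_{0,1}}$.

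For row $2$, I would partition the $\kappa$ columns into four disjoint classes according to the pattern of $1$'s already present in rows $0$ and $1$: the $t^{*}_{0,1}$ columns with $1$'s in both rows; the $t^{*}_{0}-t^{*}_{0,1}$ columns with a $1$ only in row $0$; the $t^{*}_{1}-t^{*}_{0,1}$ columns with a $1$ only in row $1$; and the $\kappa-t^{*}_{0}-t^{*}_{1}+t^{*}_{0,1}$ columns with no $1$ in either row. A valid row $2$ must place exactly $t^{*}_{0,1,2}$ ones in the first class, $t^{*}_{0,2}-t^{*}_{0,1,2}$ in the second, $t^{*}_{1,2}-t^{*}_{0,1,2}$ in the third, and $t^{*}_{2}-t^{*}_{0,2}-t^{*}_{1,2}+t^{*}_{0,1,2}$ in the fourth, so that the induced pairwise overlaps $(t^{*}_{0,2},t^{*}_{1,2})$ and the triple overlap $t^{*}_{0,1,2}$ all come out correct. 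Multiplying the four class binomials with the earlier two factors yields precisely the product appearing in (\ref{eq_choi}) (without the prefactor $\alpha$).

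The main thing to verify carefully (the potentially tedious part) is that the constraints (\ref{eq_const}) guarantee non-negativity of every binomial argument in this row-$2$ distribution and that the prescribed class occupancies indeed produce the desired overlap counts; both reduce to standard inclusion--exclusion identities on the four classes, so no real obstacle is anticipated. Once the count for a single $\bold{t}^{*}$ is established, multiplication by $\alpha$ yields (\ref{eq_choi}) and finishes the proof.
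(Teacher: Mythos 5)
Your proposal is correct and follows essentially the same route as the paper's proof: a row-by-row count of $\bold{H}^{bp}_0$ (row $0$ giving $\binom{\kappa}{t^*_0}$, row $1$ giving the next two binomials, and row $2$ distributed over the four column classes giving the last four), followed by multiplication by $\alpha$ since the $\alpha$ optimal vectors yield disjoint families of equal size. No substantive difference from the paper's argument.
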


\begin{proof}
The goal is to find the number of partitioning choices that achieve a general set of overlap parameters $\{t_0,t_1,t_2,t_{0,1}, \allowbreak t_{0,2},t_{1,2},t_{0,1,2}\}$ (not necessarily optimal). In particular, we need to find the number of different partitioning choices of an SC code with $\gamma=3$, $\kappa$, $p=1$, $m=1$, and $L$ such that:
\begin{itemize}
\item The number of $1$'s in row $i$, $0 \leq i \leq 2$, of $\bold{H}_0^{bp}$ is $t_i$.
\item The size of the overlapping set of rows $i_1$ and $i_2$, $0 \leq i_1 \leq 2$, $0 \leq i_2 \leq 2$, and $i_2 > i_1$, of $\bold{H}^{bp}_0$ is $t_{i_1,i_2}$.
\item The size of the overlapping set of rows $0$, $1$, and $2$ ($3$-way overlap) of $\bold{H}^{bp}_0$ is $t_{0,1,2}$.
\end{itemize}

We factorize the number of partitioning choices, $\mathcal{N}^g$, into three more tractable factors:
\begin{itemize}
\item[--] Choose $t_0$ positions, in which row $0$ of $\bold{H}_0^{bp}$ has $1$'s, out of $\kappa$ positions. The number of choices is:
$$\mathcal{N}^g_0={\kappa\choose{t_0}}.$$
\item[--] Choose $t_1$ positions, in which row $1$ of $\bold{H}_0^{bp}$ has $1$'s, out of $\kappa$ positions. Among these $t_1$ positions, there exist $t_{0,1}$ positions in which row $0$ simultaneously has $1$'s. The number of choices is:
$$\mathcal{N}^g_1={{t_0}\choose{t_{0,1}}}{{\kappa-t_0}\choose{t_1-t_{0,1}}}.$$
\item[--] Choose $t_2$ positions, in which row $2$ of $\bold{H}_0^{bp}$ has $1$'s, out of $\kappa$ positions. Among these $t_2$ positions, there exist $t_{0,1,2}$ positions in which rows $0$ and $1$ simultaneously have $1$'s, $t_{0,2}$ positions in which only rows $0$ simultaneously has $1$'s, and $t_{1,2}$ positions in which only rows $1$ simultaneously has $1$'s. The number of choices is:
\begin{equation*}
\begin{aligned}
\mathcal{N}^g_2=&{{t_{0,1}}\choose{t_{0,1,2}}}{{t_0-t_{0,1}}\choose{t_{0,2}-t_{0,1,2}}}{{t_1-t_{0,1}}\choose{t_{1,2}-t_{0,1,2}}}\\
&{{\kappa-t_0-t_1+t_{0,1}}\choose{t_2-t_{0,2}-t_{1,2}+t_{0,1,2}}}.
\end{aligned}
\end{equation*}
\end{itemize}
In conclusion, the number of partitioning choices that achieve a general set of overlap parameters is $\mathcal{N}^g_0\mathcal{N}^g_1\mathcal{N}^g_2$.

The solution of the optimization problem in (\ref{eq_optp}) is not unique, and there are $\alpha$ distinct solutions (optimal vectors) that all achieve $F^*$. Because of the symmetry of these $\alpha$ optimal vectors, each of them corresponds to the same $\mathcal{N}_0\mathcal{N}_1\mathcal{N}_2$ partitioning choices. The factors $\mathcal{N}_0$, $\mathcal{N}_1$, and $\mathcal{N}_2$ are obtianed by replacing each $t$ with $t^*$ (from an optimal vector $\bold{t}^*$) in the equations of $\mathcal{N}^g_0$, $\mathcal{N}^g_1$, and $\mathcal{N}^g_2$, respectively. Thus, the total number of OO partitioning choices given an optimal vector $\textbf{t}^*$ is $\mathcal{N}=\alpha\mathcal{N}_0\mathcal{N}_1\mathcal{N}_2$, which proves Lemma \ref{lem_choi}.
\end{proof}

\vspace{-0.8em}
\begin{remark}
The first seven constraints of the optimization problem, which are stated in (\ref{eq_const}), can be easily verified from (\ref{eq_choi}) in Lemma \ref{lem_choi} by replacing each $t^*$ with $t$.
\end{remark}

\vspace{-0.9em}
\section{Circulant Power Optimization}\label{sec_cpo}

After picking an optimal vector $\bold{t}^*$ to partition $\bold{H}^{bp}$ and design $\bold{H}^{bp}_{SC}$, we run our heuristic CPO to further reduce the number of $(3, 3, 3, 0)$ UGASTs in the graph of $\bold{H}^{b}_{SC}$, which has $\gamma=3$. The steps of the CPO are:
\begin{enumerate}
\item Initially, assign circulant powers as in AB codes to all the $\gamma \kappa$ $1$'s in $\bold{H}^{bp}$ (results in no cycles of length $4$ in $\bold{H}^{b}$ and $\bold{H}^b_{SC}$).
\item Design $\bold{H}^{bp}_{SC2}$ using $\bold{H}^{bp}$ and $\bold{t}^*$ such that $\bold{H}^{bp}_{SC2}$ contains only two replicas, $\bold{R}_1$ and $\bold{R}_2$. Circulant powers of the $1$'s in $\bold{H}^{bp}_{SC2}$ are copied from the $1$'s in $\bold{H}^{bp}$.
\item Locate all the cycles of lengths $4$ and $6$ in $\bold{H}^{bp}_{SC2}$.
\item Specify the cycles of length $6$ in $\bold{H}^{bp}_{SC2}$ that have (\ref{eq_cycle}) satisfied, and call them \textit{\textbf{active cycles}}. Let $2F^a_s$ (resp., $F^a_d$) be the number of active cycles having their VNs spanning only $\bold{R}_1$ or only $\bold{R}_2$ (resp., both $\bold{R}_1$ and $\bold{R}_2$).
\item Compute the number of $(3, 3, 3, 0)$ UGASTs in $\bold{H}^{b}_{SC}$ using the following formula:
\vspace{-0.2em}\begin{equation}
F_{SC} = \left ( LF^a_s + (L-1)F^a_d \right ) p. \vspace{-0.2em}
\end{equation}
\item Count the number of active cycles each $1$ in $\bold{H}^{bp}_{SC2}$ is involved in. Give weight $1$ (resp., $2$) to the number of active cycles having their VNs spanning only $\bold{R}_1$ or only $\bold{R}_2$ (resp., both $\bold{R}_1$ and $\bold{R}_2$).
\item Map the counts from step 6 to the $1$'s in $\bold{H}^{bp}$, and sort these $1$'s in a list descendingly according to the counts.
\item Pick a subset of $1$'s from the top of this list, and change the circulant powers associated with them.
\item Using these interim new powers, do steps 4 and 5.
\item If $F_{SC}$ is reduced while maintaining no cycles of length $4$ in $\bold{H}^{b}_{SC}$, update $F_{SC}$ and the circulant powers, then go to step 6. Otherwise, return to step 8.
\item Iterate until the target $F_{SC}$ is achieved.
\end{enumerate}
Note that step 8 is performed heuristically.

\begin{figure}[H]
\vspace{-1.8em}
\center
\includegraphics[width=3.6in]{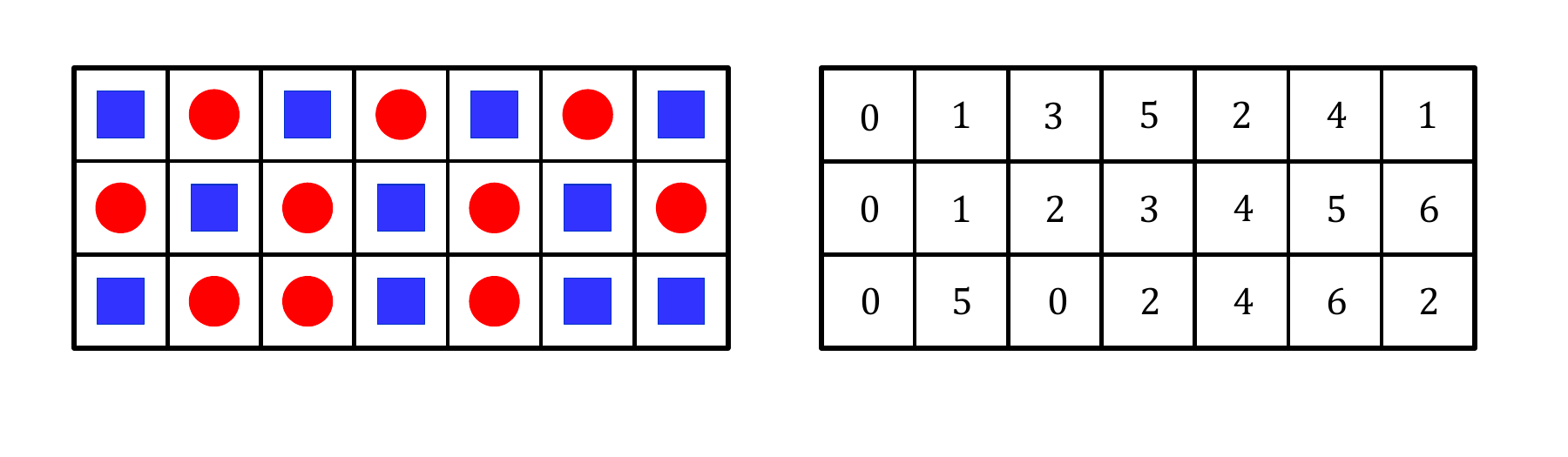}\vspace{-1.7em}
\vspace{-0.4em}
\text{\hspace{-0em}\footnotesize{(a) \hspace{14em} (b)}}
\caption{(a) The OO partitioning of $\bold{H}^{bp}$ (or $\bold{H}^{b}$) of the SC code in Example~\ref{ex_oo}. Entries with circles (resp., squares) are assigned to $\bold{H}^{bp}_0$ (resp., $\bold{H}^{bp}_1$). (b) The circulant power arrangement for the circulants in $\bold{H}^b$.}
\label{Fig_OOp7}
\vspace{-0.7em}
\end{figure}

\begin{example}\label{ex_oo}
Suppose we want to design an SC code with $\gamma = 3$, $\kappa=7$, $p=7$, $m=1$, and $L=30$ using the OO partitioning and the CPO. Solving the optimization problem in (\ref{eq_optp}) yields an optimal vector $\bold{t}^*=[3 \text{ } 4 \text{ } 3 \text{ } 0 \text{ } 1 \text{ } 2 \text{ } 0]$, which gives $F^* = 1170$ cycles of length $6$ in the graph of $\bold{H}^{bp}_{SC}$. Fig. \ref{Fig_OOp7}(a) shows how the partitioning is applied on $\bold{H}^{bp}$ (or $\bold{H}^{b}$). Next, applying the CPO results in only $203$ $(3, 3, 3, 0)$ UGASTs in the unlabeled graph of the SC code, which is the graph of $\bold{H}^{b}_{SC}$. Fig. \ref{Fig_OOp7}(b) shows the final circulant power arrangement for all circulants in $\bold{H}^b$.
\end{example}

The OO-CPO technique for designing $\bold{H}^b_{SC}$ is based on solving a set of equations, then applying a heuristic program on two replicas to optimize the circulant powers. Moreover, the OO partitioning has orders of magnitude fewer number of partitioning choices compared to the MO partitioning (see \cite[Lemma 3]{homa_mo}). We can even use any choice of the OO partitioning choices without having to compare their performances explicitly. All these reasons demonstrate that the OO-CPO technique is not only better in performance (see Section \ref{sec_sims} for details), but also much faster than the MO technique.

\vspace{-0.2em}
\section{WCM Framework: On The Removal of GASTs}\label{sec_wcm}

After applying the OO-CPO technique to optimize the unlabeled graph of the SC code, we optimize the edge weights. In particular, we use the WCM framework \cite{ahh_jsac, ahh_tit} to remove GASTs from the labeled graph of the NB-SC code through edge weight processing. There are multiple parameters that control the difficulty of the removal of a certain GAST from the Tanner graph of a code. The number of distinct WCMs associated with the UGAST and the minimum number of edge weight changes needed to remove the GAST, denoted by $E_{GAST,min}$, are among these parameters. A third parameter is the number of sets of edge weight changes that have cardinality $E_{GAST,min}$ and are candidates for the GAST removal process. The first two parameters are studied in \cite{ahh_tit}. We discuss the third parameter in this section. As the number of candidate sets of cardinality $E_{GAST,min}$ increases, the difficulty of the GAST removal decreases.

In this section, unless otherwise stated, when we say nodes are ``connected'', we mean they are ``directly connected'' or they are ``neighbors''. The same applies conceptually when we say an edge is ``connected'' to a node or vice versa.

\begin{remark}
A GAST is removed by performing $E_{GAST,min}$ edge weight changes for edges connected to degree-$2$ CNs only (see also \cite{ahh_tit}). Whether a candidate set of edge weight changes indeed results in the GAST removal or not is determined by checking the null spaces of the WCMs \cite{ahh_jsac, ahh_tit}.
\end{remark}

To minimize the number of edge weight changes performed to remove a GAST, we need to work on the VNs that are connected to the maximum number of unsatisfied CNs. Thus, $E_{GAST,min}=g-b_{vm}+1$ (see \cite{ahh_tit}), where $g = \left \lfloor \frac{\gamma-1}{2} \right  \rfloor$ and $b_{vm}$ is the maximum number of existing unsatisfied CNs per VN in the GAST. Define $E_{mu}$ as the topological upper bound on $E_{GAST,min}$ and $d_{1,vm}$ as the maximum number of existing degree-$1$ CNs per VN in the GAST. Thus, from \cite{ahh_tit}:
\begin{equation}\label{eq_emu}
E_{mu} = g-d_{1,vm}+1 \geq E_{GAST,min}.
\end{equation}
Note that (\ref{eq_emu}) follows from $d_{1,vm} \leq b_{vm}$. In this section, we study GASTs with $b=d_1$, which means the upper bound is achieved, i.e., $E_{GAST,min}=E_{mu}$. Moreover, for simplicity, we assume that all the VNs that are connected to $d_{1,vm}$ degree-$1$ CNs each are only connected to CNs of degree $\leq 2$.

\begin{thm}\label{th_choices}
Consider an $(a, b, d_1, d_2, d_3)$ GAST, with $b=d_1$, in an NB code defined over $GF($q$)$ that has column weight $\gamma$ and no cycles of length $4$. The number of sets of edge weight changes with cardinality $E_{GAST,min}$ (or $E_{mu}$) that are candidates for the GAST removal process is given as follows.

If $d_{1,vm} \neq g$:
\vspace{-0.3em}\begin{equation}\label{eq_choices1}
S_{mu} = a_{vm}\binom{\gamma-d_{1,vm}}{E_{mu}}(2(q-2))^{E_{mu}},
\end{equation}
where $a_{vm}$ is the number of VNs connected to $d_{1,vm}$ degree-$1$ CNs each.

If $d_{1,vm} = g$:
\vspace{-0.4em}\begin{equation}\label{eq_choices2}
S_{mu} = \left ( a_{vm} \left \lceil \frac{\gamma+1}{2} \right \rceil - n_{co} \right )2(q-2),
\end{equation}
where $n_{co}$ is the number of degree-$2$ CNs connecting any two of these $a_{vm}$ VNs.
\end{thm}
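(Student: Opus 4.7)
The plan is to enumerate candidate sets by associating each with a distinguished VN in $a_{vm}$ and then correcting for any overcounting when two distinct such VNs generate the same set. Since $E_{GAST,min}=E_{mu}=g-d_{1,vm}+1$ is attained, as observed in the paragraph preceding the theorem, every candidate set must place all $E_{mu}$ of its weight changes on edges incident to a single VN $v$ that is connected to $d_{1,vm}$ unsatisfied (hence degree-$1$, since $b=d_1$) CNs; there are exactly $a_{vm}$ such VNs. The same discussion restricts the modified edges to those of degree-$2$ CNs, and by the theorem's simplifying assumption the remaining $\gamma-d_{1,vm}$ neighbors of $v$ are all degree-$2$.

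For a fixed $v$ the enumeration factorizes into three independent choices: pick the $E_{mu}$ degree-$2$ CN neighbors of $v$ whose edges will be re-weighted, in $\binom{\gamma-d_{1,vm}}{E_{mu}}$ ways; for each chosen CN $c$, pick one of its two incident edges ($2$ options, namely the edge from $c$ to $v$ or the edge from $c$ to the other GAST VN it connects); and for each selected edge, pick a new weight from $\mathrm{GF}(q)\setminus\{0,w\}$, where $w$ is the current weight ($q-2$ options). Summing over the $a_{vm}$ choices of $v$ gives the raw total $a_{vm}\binom{\gamma-d_{1,vm}}{E_{mu}}(2(q-2))^{E_{mu}}$.

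The main obstacle is checking whether two distinct VNs $v_1,v_2\in a_{vm}$ can generate the same candidate set, and here the no-$4$-cycle assumption is decisive: any two distinct VNs share at most one common CN neighbor. If the same candidate set arose from both $v_1$ and $v_2$, its $E_{mu}$ selected degree-$2$ CNs would all be common neighbors of both, forcing $E_{mu}\le 1$. Hence for $d_{1,vm}\ne g$ (where $E_{mu}\ge 2$) no overcounting is possible, proving (\ref{eq_choices1}). For $d_{1,vm}=g$ we have $E_{mu}=1$ and $\gamma-d_{1,vm}=\lceil(\gamma+1)/2\rceil$; overcounting now arises at precisely the $n_{co}$ degree-$2$ CNs joining two $a_{vm}$-VNs. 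Each such shared CN contributes $2(q-2)$ genuine candidate sets but is counted twice (once from each of its two $a_{vm}$ endpoints) in the raw total, so subtracting $n_{co}\cdot 2(q-2)$ yields (\ref{eq_choices2}).
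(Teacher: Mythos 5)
Your proposal is correct and takes essentially the same approach as the paper: it restricts attention to the $a_{vm}$ VNs with $d_{1,vm}$ degree-$1$ CNs, factorizes the count as $\binom{\gamma-d_{1,vm}}{E_{mu}}$ CN choices times $2(q-2)$ per CN, and then subtracts the overcount from shared degree-$2$ CNs only when $E_{mu}=1$, invoking the girth-$6$ condition exactly as the paper does. Your double-counting argument is, if anything, slightly more explicit than the paper's.
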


\begin{proof}
Whether $d_{1,vm} \neq g$ or not, to minimize the number of edge weight changes, we need to target the VNs that are connected to the maximum number of unsatisfied CNs. By definition, and since $b=d_1$, the number of VNs of this type is $a_{vm}$, and each is connected to $d_{1,vm}$ unsatisfied CNs.

In the case of $d_{1,vm} \neq g$, which is the general case, for any VN of the $a_{vm}$ pertinent VNs, there are $\binom{\gamma-d_{1,vm}}{E_{mu}}$ different ways of selecting $E_{mu}$ degree-$2$ satisfied CNs connected to this VN. Each of these CNs has $2$ edges we can change their weights (not simultaneously). Moreover, each edge can have $(q-2)$ different new weights (excluding the $0$ and the current weight). Thus, the number of candidate sets is:
\begin{equation}\label{eq_prch1}
S_{mu} = a_{vm}\binom{\gamma-d_{1,vm}}{E_{mu}}2^{E_{mu}}(q-2)^{E_{mu}},
\end{equation}
which is a rephrased version of (\ref{eq_choices1}).

In the case of $d_{1,vm} = g$, from (\ref{eq_emu}), $E_{mu}=1$ (the GAST is removed by a single edge weight change). Moreover,
\begin{equation}\label{eq_prch2}
\gamma-d_{1,vm} = \gamma-g = \left \lceil \frac{\gamma+1}{2} \right \rceil.
\end{equation}
Substituting (\ref{eq_prch2}) and $E_{mu}=1$ into (\ref{eq_prch1}) gives that the number of candidate sets follows the inequality:
\begin{equation}\label{eq_prch3}
S_{mu} \leq a_{vm} \left \lceil \frac{\gamma+1}{2} \right \rceil 2(q-2).
\end{equation}
In (\ref{eq_prch3}), the equality is achieved only if there are no shared degree-$2$ CNs between the VNs that have $g$ unsatisfied CNs, i.e., $n_{co}=0$. Otherwise, $n_{co}$ has to be subtracted from $a_{vm} \left \lceil \frac{\gamma+1}{2} \right \rceil$, which proves (\ref{eq_choices2}).

Note that the subtraction of $n_{co}$ is not needed if $d_{1,vm} \neq g$. The reason is that if $d_{1,vm} \neq g$ (or $E_{mu} \neq 1$) multiple edges connected to the same CN cannot exist in the same candidate set. Additionally, since our codes have girth at least $6$, there does not exist more than one degree-$2$ CN connecting the same two VNs in a GAST. 
\end{proof}

\begin{figure}[H]
\vspace{-2.5em}
\center
\includegraphics[width=3.5in]{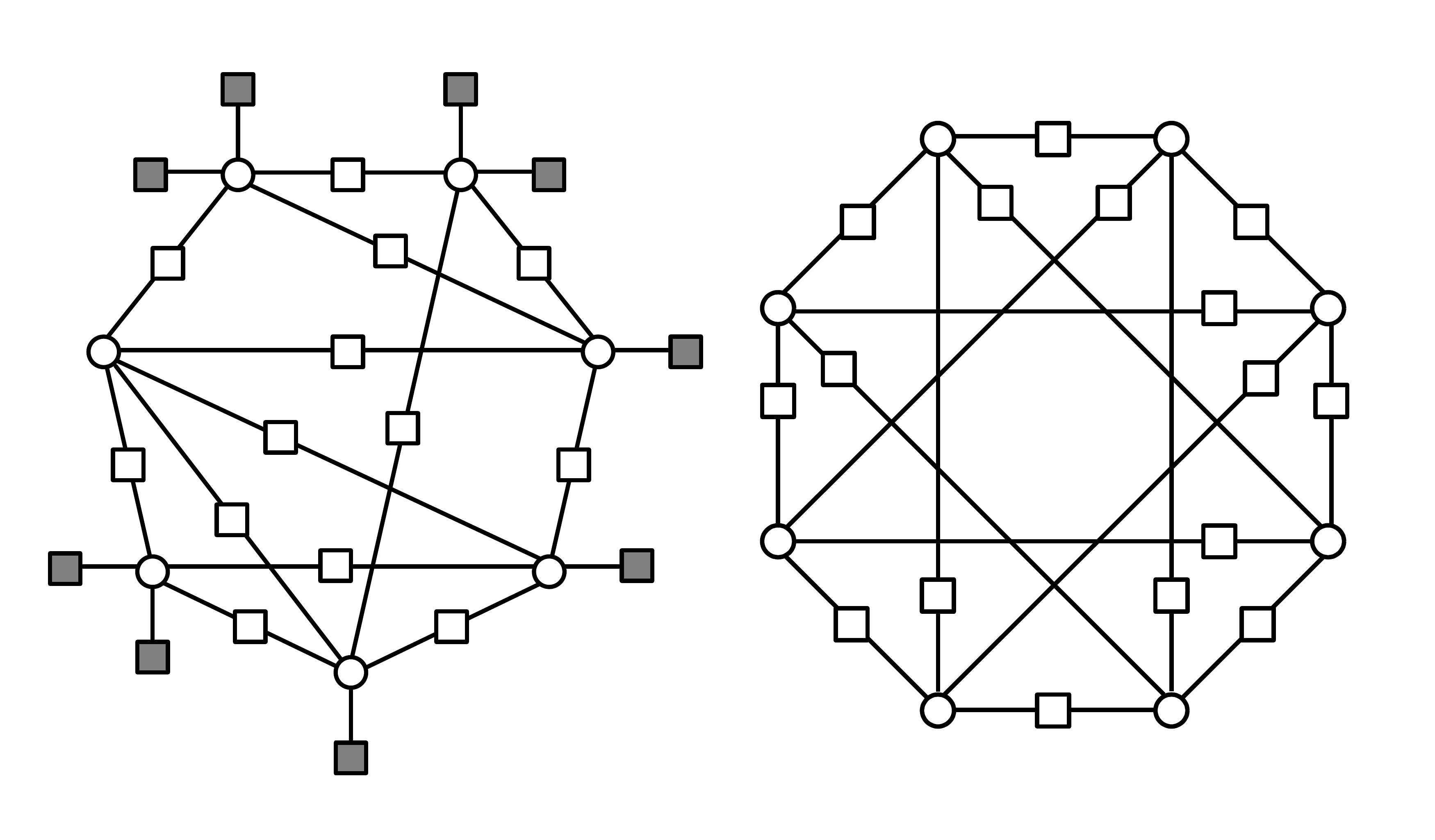}\vspace{-0.8em}
\vspace{-0.5em}
\text{\hspace{-0em}\footnotesize{(a) \hspace{14em} (b)}}
\caption{(a) A $(7, 9, 9, 13, 0)$ GAST ($\gamma=5$). (b) An $(8, 0, 0, 16, 0)$ GAST ($\gamma=4$). Appropriate non-binary edge weights are assumed.}
\label{Fig_choices}
\vspace{-0.6em}
\end{figure}

\begin{example}
Consider the $(7, 9, 9, 13, 0)$ GAST over GF($q$) in Fig. \ref{Fig_choices}(a) ($\gamma = 5$). For this GAST, $g=2$, $d_{1,vm}=2$, $a_{vm}=3$, and from (\ref{eq_emu}), $E_{GAST,min}=E_{mu}=1$. Moreover, $n_{co}=1$ (only one shared degree-$2$ CN between two VNs having two unsatisfied CNs each). Thus, from (\ref{eq_choices2}), the number of candidate sets of cardinality $1$ is:
\begin{equation}
S_{mu}=(3(3)-1)(2)(q-2)=16(q-2).
\end{equation}

Contrarily, for the $(8, 0, 0, 16, 0)$ GAST over GF($q$) in Fig. \ref{Fig_choices}(b) ($\gamma = 4$), $g=1$, $d_{1,vm}=0$, $a_{vm}=8$, and from (\ref{eq_emu}), $E_{GAST,min}=E_{mu}=2$. Thus, from (\ref{eq_choices1}) (the general relation), the number of candidate sets of cardinality $2$ is:
\begin{equation}
S_{mu}=8 \binom{4}{2} (2)^2 (q-2)^2 = 192(q-2)^2.
\end{equation}
\end{example}

\section{Code Design Steps and Simulation Results}\label{sec_sims}

In this section, we present our $\gamma = 3$ NB-SC code design approach for Flash memories, and the experimental results demonstrating its effectiveness. The steps of our OO-CPO-WCM approach are:

\begin{enumerate}
\item Specify the code parameters, $\kappa$, $p$, and $L$, with $m=1$.
\item Solve the optimization problem in (\ref{eq_optp}) for an optimal vector of overlap parameters, $\bold{t}^*$.
\item Using $\bold{H}^{bp}$ and $\bold{t}^*$, apply the circulant power optimizer to reach the powers of the circulants in $\bold{H}^{b}$ and $\bold{H}^{b}_{SC}$. Now, the binary image, $\bold{H}^{b}_{SC}$, is designed.
\item Assign the edge weights in $\bold{H}^{b}$ to generate $\bold{H}$. Next, partition $\bold{H}$ using $\bold{t}^*$, and couple the components to construct $\bold{H}_{SC}$.
\item Using initial simulations over a practical Flash channel and combinatorial techniques, determine the set $\mathcal{G}$ of GASTs to be removed from the graph of $\bold{H}_{SC}$.
\item Use the WCM framework (see \cite[Algorithm 2]{ahh_jsac}) to remove as many as possible of the GASTs in $\mathcal{G}$.
\end{enumerate}

In this section, the CV and MO results proposed are the best that can be achieved by these two techniques \cite{homa_sc, homa_mo}.

\vspace{-0.8em}
\begin{table}[H]
\caption{Number of $(3, 3, 3, 0)$ UGASTs in SC codes with $\gamma = 3$, $m = 1$, and $L = 30$ designed using different techniques.
}
\vspace{-0.5em}
\centering
\scalebox{0.85}
{
\begin{tabular}{|c|c|c|c|c|}
\hline

\multirow{2}{*}{Design technique} & \multicolumn{4}{|c|}{\makecell{Number of $(3, 3, 3, 0)$ UGASTs}} \\
\cline{2-5}
{} & $\kappa=p=7$ & \makecell{$\kappa=p=11$} & \makecell{$\kappa=p=13$} & \makecell{$\kappa=p=17$} \\
\hline
Uncoupled with AB & 8820 & 36300 & 60840 & 138720 \\
\hline
SC CV with AB & 3290 & 14872 & 25233 & 59024 \\
\hline
SC MO with AB & 609 & 3850 & 6851 & 15997 \\
\hline
SC best with AB & 609 & 3520 & & \\
\hline
SC OO-CPO with CB & 203 & 2596 & 5356 & 14960 \\
\hline
\end{tabular}}
\label{table_1}
\end{table}
\vspace{-0.5em}

We start our experimental results with a table comparing the number of $(3, 3, 3, 0)$ UGASTs in SC codes designed using various techniques. All the SC codes have $\gamma=3$, $m=1$, and $L=30$. AB codes are used as the underlying block codes in all the SC code design techniques we are comparing the proposed OO-CPO technique against. Table \ref{table_1} demonstrates reductions in the number of $(3, 3, 3, 0)$ UGASTs achieved by the OO-CPO technique over the MO technique (resp., the CV technique) that ranges between $6.5\%$ and $66.7\%$ (resp., $74.7\%$ and $93.8\%$). More intriguingly, the table shows that the OO-CPO technique provides lower number of $(3, 3, 3, 0)$ UGASTs than the best that can be achieved if AB underlying block codes are used. Note that this ``best'' is reached using exhaustive search, and that is the reason why we could not provide its counts for $\kappa = p > 11$.

Next, we provide simulation results verifying the performance gains achieved by our NB-SC code design approach for Flash memories. The Flash channel we use is a practical Flash channel, which is the normal-Laplace mixture (NLM) Flash channel \cite{mit_nl}. Here, we use $3$ reads, and the sector size is $512$ bytes. We define RBER as the raw bit error rate \cite{ahh_jsac}, and UBER as the uncorrectable bit error rate \cite{ahh_jsac}. One formulation of UBER, which is recommended by industry, is the frame error rate (FER) divided by the sector size in bits. Simulations were done in software on a high speed cluster of machines.

All the NB-SC codes we simulated are defined over GF($4$), and have $\gamma = 3$, $\kappa = p =19$, $m = 1$, and $L = 20$ (block length $=14440$ bits and rate $\approx 0.834$). Code 1 is uncoupled (AB). Code 2 is designed using the CV technique. Code 3 is designed using the OO technique (with no CPO applied). The underlying block codes of Codes 2 and 3 are AB codes. Code 4 is designed using the OO-CPO technique. The edge weights of Codes 1, 2, 3, and 4 are selected randomly. Code 5 (resp., Code 6) is the result of applying the WCM framework to Code 1 (resp., Code 4) to optimize the edge weights.

Code 1 (resp., Code 2 and Code 4) has $129960$ (resp., $55366$ and $16340$) $(3, 3, 3, 0)$ UGASTs. Additionally, Code 1 (resp., Code 2 and Code 4) has $4873500$ (resp., $2002353$ and $1156264$) $(4, 4, 4, 0)$ UGASTs. The $(4, 4, 4, 0)$ UGAST is the second most common substructure in the dominant GASTs of NB codes with $\gamma=3$ simulated over Flash channels.

\begin{figure}[H]
\vspace{-1.0em}
\center
\includegraphics[trim={0.4in 0.2in 0.5in 0.2in},clip,width=3.5in]{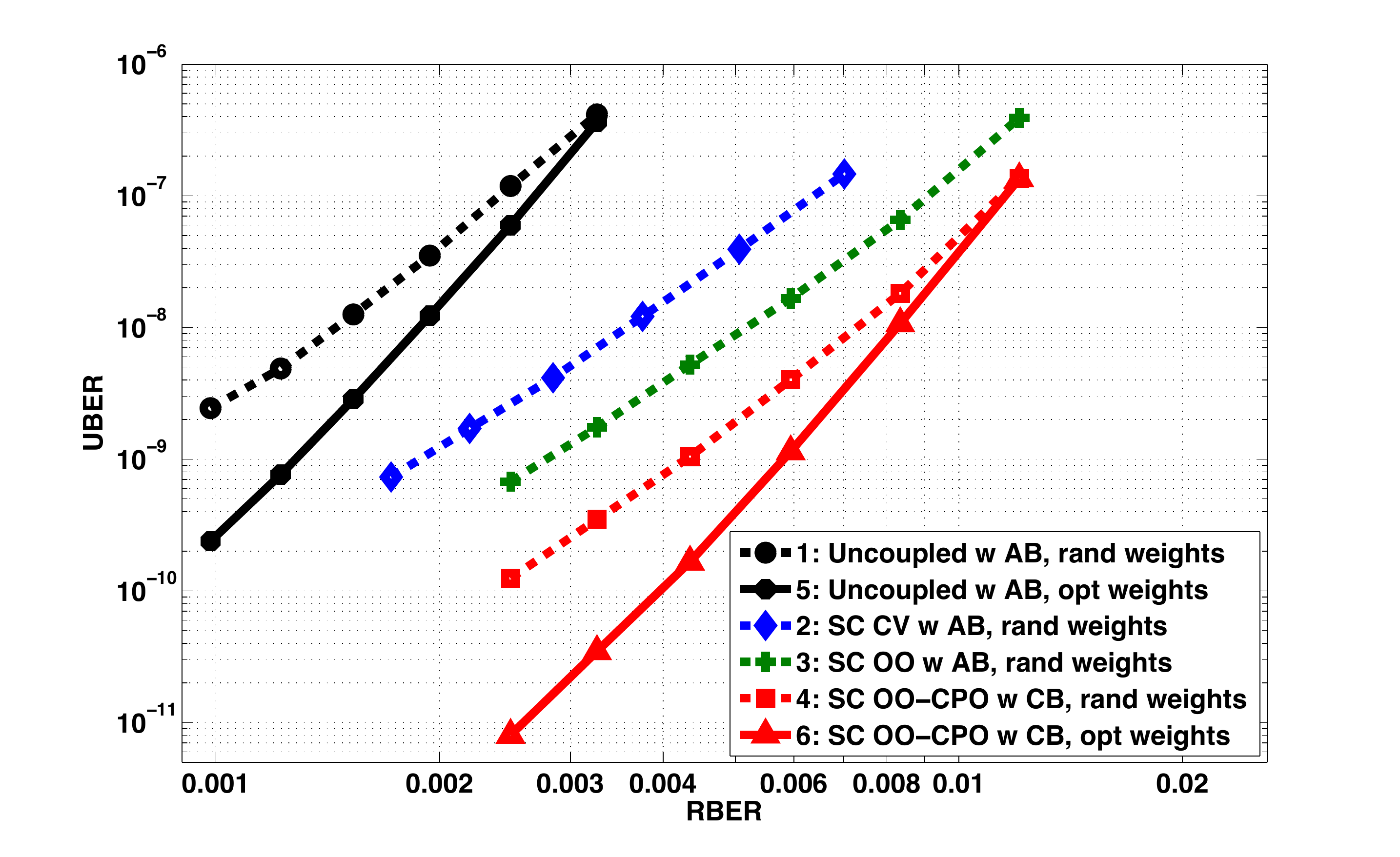}
\vspace{-1.7em}
\caption{Simulation results over the NLM Flash channel for SC codes with $\gamma = 3$, $m = 1$, and $L = 20$ designed using different techniques.}
\label{Fig_uber}
\vspace{-0.5em}
\end{figure}

Fig. \ref{Fig_uber} demonstrates the performance gains achieved by each stage of our NB-SC code design approach. Code 3 outperforms Code 2 by about $0.6$ of an order of magnitude, which is the gain of the first stage (OO). Code 4 outperforms Code 3 by about $0.7$ of an order of magnitude, which is the gain of the second stage (CPO). Code 6 outperforms Code 4 by about $1.2$ orders of magnitude, which is the gain of the third stage (WCM). Moreover, the figure shows that the NB-SC code designed using our OO-CPO-WCM approach, which is Code 6, achieves about $200\%$ (resp., more than $500\%$) RBER gain compared to Code 2 (resp., Code 1) over a practical Flash channel. An intriguing observation we have encountered while performing these simulations is the change in the error floor properties when we go from Code 2 to Code 4. In particular, while the $(6, 0, 0, 9, 0)$ GAST was a dominant object in the case of Code 2, we have encountered very few $(6, 0, 0, 9, 0)$ GASTs in the error profile of Code 4.

\section{Conclusion}\label{sec_conc}

We proposed a combinatorial approach for the design of NB-SC codes optimized for practical Flash channels. The OO-CPO technique efficiently optimizes the underlying topology of the NB-SC code, then the WCM framework optimizes the edge weights. NB-SC codes designed using our approach have reduced number of detrimental GASTs, thus outperforming existing NB-SC codes over Flash channels. The proposed approach can help increase the reliability of ultra dense storage devices, e.g., emerging 3-D Flash devices.

\section*{Acknowledgement}\label{sec_ack}

The research was supported in part by a grant from ASTC-IDEMA and by NSF.


\end{document}